\newtheorem{theorem}{Theorem} 
\newtheorem{proposition}{Proposition}
\begin{document}

\title{A Polynomial Interpolation based Quantum Key Reconciliation  Protocol: Error Correction without Information Leakage}
\author{Gunes Karabulut Kurt,Senior Member, IEEE,
Enver Ozdemir, Member, IEEE, Neslihan Aysen Ozkirisci, Ozan Alp Topal, Student Member, IEEE and Emel A. Ugurlu

\thanks{G. Karabulut Kurt and O. A. Topal are with Department of Electronics and Communication Engineering at Istanbul Technical University, , Istanbul, Turkey. (e-mail: gkurt@itu.edu.tr; topalo@itu.edu.tr)}
\thanks{E. Ozdemir is with Informatics Institute, Istanbul Technical University, Istanbul, Turkey. (e-mail: ozdemiren@itu.edu.tr)}
\thanks{N. A. Ozkirisci is with Department of Mathematics, Yildiz Technical University, Istanbul, Turkey. (e-mail: aozk@yildiz.edu.tr)}
\thanks{E. A. Ugurlu is with Department of Mathematics, Marmara University, Istanbul, Turkey. (e-mail: emel.aslankarayigit@marmara.edu.tr).}}

\maketitle

\begin{abstract}
In this work, we propose a novel key reconciliation protocol for the quantum key distribution (QKD). Based on Newton's polynomial interpolation, the proposed protocol aims to correct all erroneous bits at the receiver without revealing information to the eavesdropper. We provide the exact frame error rate (FER) expression of the proposed protocol. The inherent nature of the proposed algorithm ensures correcting all erroneous bits if the algorithm succeeds. We present an information-theoretical proof that the revealed information during the key reconciliation process is equal to zero. We also provide a numerical comparison of our algorithm with the asymptotic performance of the error-correcting codes and two exemplary low-density-parity-check (LDPC) codes. The results highlight that our algorithm provides superior performance when compared to the LDPC codes, regardless of the distance between Alice and Bob. Furthermore, the proposed key reconciliation protocol is usable for the longer quantum link distances than the state-of-the-art protocols.
\end{abstract}

\begin{IEEEkeywords}
Key reconciliation, quantum key distribution, polynomial interpolation.
\end{IEEEkeywords}

\IEEEpeerreviewmaketitle

\section{Introduction}
\label{sec:introduction}
\IEEEPARstart{B}{y} exploiting the fundamental laws of quantum physics, quantum key distribution (QKD) promises a theoretically unbreakable shield for a shared message between two distant nodes, Alice and Bob. As described in the first QKD protocol \cite{bennett}, BB84, the uncertainty principle and the no-cloning principle respectively avoid any eavesdropper to correctly decode the shared message and  to hide from Alice and Bob \cite{nocloning}. Considering these assurances, the shared message between Alice and Bob is assumed to be information-theoretically secure \cite{lo1999unconditional}, and QKD provides the means to share the secret keys that can be utilized for cryptographic purposes. Following BB84, QKD has expanded into an active area of research, both in theoretical and practical aspects. New key agreement protocols have been proposed  \cite{protocol1}, \cite{protocol2}, \cite{protocol3}, and their security proofs have been obtained \cite{lo1999unconditional}.

As shown in Figure \ref{fig:QKD_system}, in a QKD setup, two different links are assumed to be available between Alice and Bob: a quantum link for the quantum key agreement process, and an authenticated public communication link (shown as the public channel) for the key reconciliation process. Both of these links may be intercepted by an unauthorized eavesdropper node, Eve.   Alice generates the secret key message (key generation) and transmits the secret message to Bob via the quantum link. The transmitted message bits are converted into qubits (bit/qubit), where generally are realized by the polarized light beams and transmitted over a fiber network or a free space optical (FSO) network \cite{scarani2009security}. Any eavesdropping activity on the quantum link disrupts changes the state of the transmitted qubit, and result in errors in the transmitted message. Besides eavesdropping activity, the transmitted message signal also fades as the link distance increases, named as path loss. Therefore, Alice and Bob first determine the error rate of their shared key (\textit{key sifting}). Then, they utilize the public authenticated communication link to correct the erroneous message bits at the receiver, Bob (\textit{key reconcilation}). The error correction operation is named as key reconciliation, and any shared information over the public channel is assumed to be perfectly obtained by Eve. Since the shared information is related to the secret key, Eve can obtain some amount of information regarding the shared message. Therefore, Alice and Bob both remove the related information that Eve may capture in this process and consequently resulting in a reduced key length.  This problem is referred  as information leakage and reduces the efficiency of the key reconciliation algorithm. The success rate of a key reconciliation protocol is defined by frame error rate (FER), where it's value is equal to the probability of successful reconciliation.

\begin{figure}[t]
	\centering
	\includegraphics[width=\linewidth]{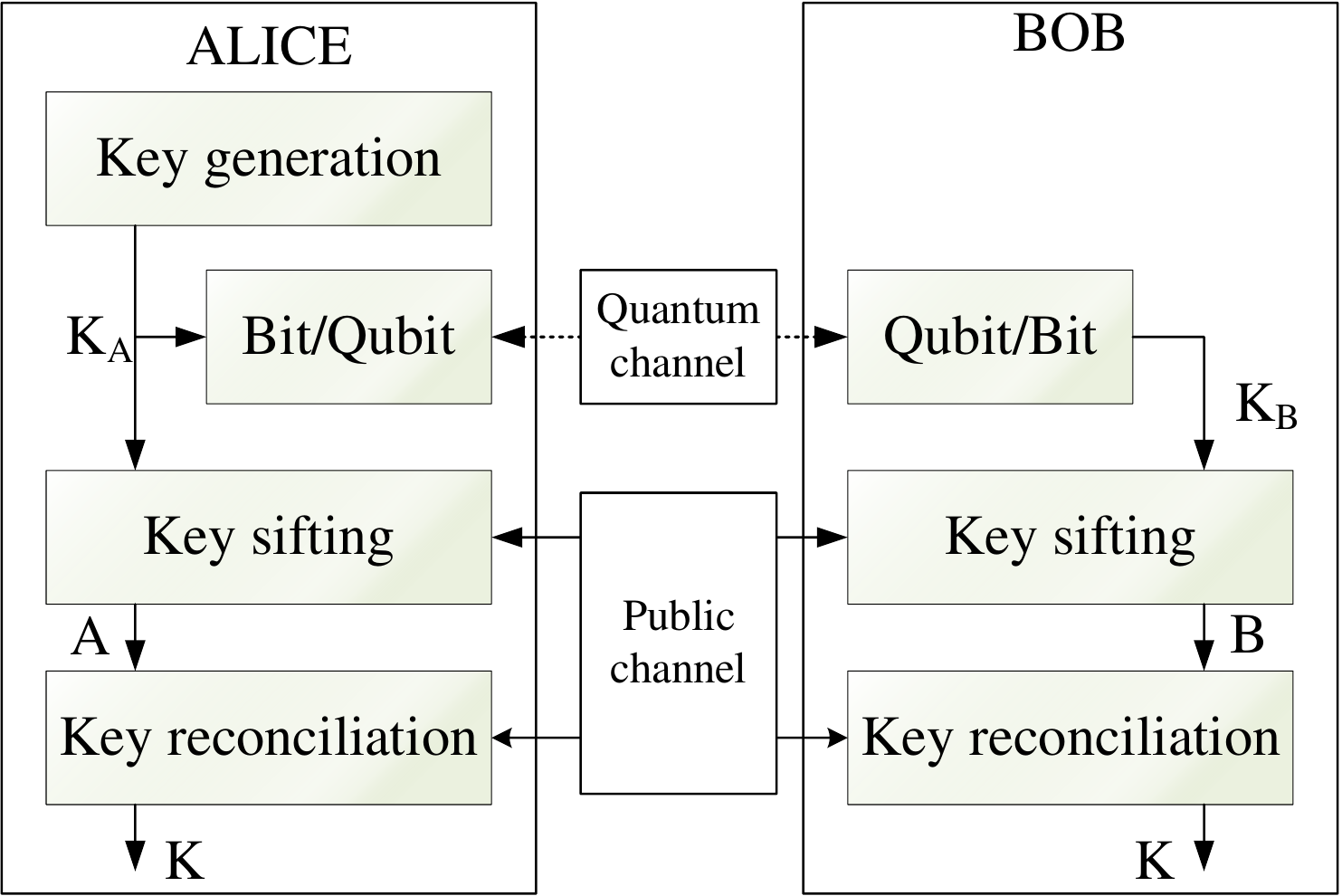}
	\caption{The system model of QKD. The proposed protocol corresponds to key reconciliation block of the system model.}
	\label{fig:QKD_system}
\end{figure}

The first key reconciliation algorithm, Cascade, targets correcting the erroneous bits with minimum information leakage \cite{brassard1993secret}. As the length of the quantum link increases, the key reconciliation becomes more challenging due to the increased error rate. Furthermore, as the error rate increases, the amount of communication between Alice and Bob also increases, which referred to as communication overload \cite{rec-compare}. To meet these requirements, over the decades, some improvements are introduced to the Cascade algorithm, but its main disadvantage, the communication overload, is still an open issue \cite{newcascade}, \cite{ma2010improvement}.

Following Cascade, Winnow protocol is proposed to overcome the communication overload problem. However, by introducing additional errors (error propagation) at the receiver, Winnow protocol limits the length of the QKD link. More prominent key reconciliation is obtained by the utilization of the error-correcting codes (ECC). They are easy to integrate into QKD systems for the key reconciliation purposes since they are already intensively researched for reversing the disruptive channel fading effects in the mobile communication networks. While the computational complexity varies according to the selected code, error propagation and information leakage are still open issues in these code-based approaches \cite{errorcorrection}, \cite{elkouss2010information}. By providing favorable error rate performance and comparatively low complexity structures, low-density-parity-check (LDPC) codes are the most intensely utilized ECC types, and the-state-of-the-art results are given in \cite{ldpc}.   More recently, in our previous work, we have proposed a key verification protocol that finds and removes all of the erroneous bits by blocking any information reveal to Eve in this process \cite{KUR}. One drawback of this protocol is that Alice and Bob both lose some of the correct bits as well as the erroneous bits since we remove them from the shared message completely. The literature still lacks a key reconciliation protocol that can correct all erroneous bits without revealing any information even considering long quantum links, where the initial error rates are very high.

In this paper, we propose a key reconciliation protocol that aims to correct all erroneous bits with zero information leakage. The fundamental idea behind our key reconciliation protocol is based on polynomial interpolation as Shamir's (k,n) thresholding scheme. Before starting the key reconciliation, the original and received messages at respectively Alice and Bob are divided into blocks with equal lengths, and the key reconciliation protocol is applied to each block separately. First, Alice selects a certain number of bits in a block and generates random numbers in a finite field with the same number of selected bits. Then, the protocol interpolates a polynomial, where the vertical axis values of the polynomial are the selected bits, and the horizontal axis values of the polynomial are the randomly generated numbers. The vertical axis values for the remaining bits in the block are obtained according to the interpolated polynomial. Then, all horizontal axis values are shared via the public communication channel. The aim of Bob is obtaining the same polynomial with the received message bits and shared horizontal axis values. Bob can obtain the same polynomial if and only if he has a certain number of correct bits. By interpolating the polynomial and obtaining its roots, Bob can find which part of the message is erroneous and correct them. Also, the shared information can not be used by Eve unless she knows a predetermined number of bit-streams through the key bits, which is limited by the nature of the QKD system \cite{nocloning}.
 
{The main contributions of this paper can be listed as below;
	\begin{itemize}
		\item We propose a polynomial interpolation based key reconciliation protocol for QKD. We obtain the exact FER expression for our algorithm.
		
		\item The proposed reconciliation protocols also ensures identical secret keys at Alice and Bob, if the protocol succeeds. Therefore, the error propagation through the key reconciliation process is prevented by the inherent characteristics of the proposed approach.
		
		\item  We provide an information theoretical proof that the proposed protocol leaks no information to Eve, by solely sharing unrelated information during the key reconciliation process.
		
		\item We provide numerical analysis to compare the performance of our protocol with the asymptotical performance of the error correcting codes and two of the state-of-the-art LDPC codes considering a fiber link and an FSO link for the quantum key agreement part.

\end{itemize}}

One drawback of our algorithm is the increased computational complexity in comparison to state-of-the-art key verification algorithms.

The remainder of this paper is structured as follows. In the following section, we give the mathematical foundations on polynomial interpolation, root finding and describe the physical parameters of the QKD system model. Then, in Section III, the proposed key reconciliation algorithm is described. In Section IV, we provide the performance analysis metrics of our algorithm. The numerical analysis is given in Section V, and finally, the concluding remarks are drawn and the future work is presented in Section VI.

\begin{figure*}[t]
	\begin{center}
		\centering
		\includegraphics[width=0.85\textwidth]{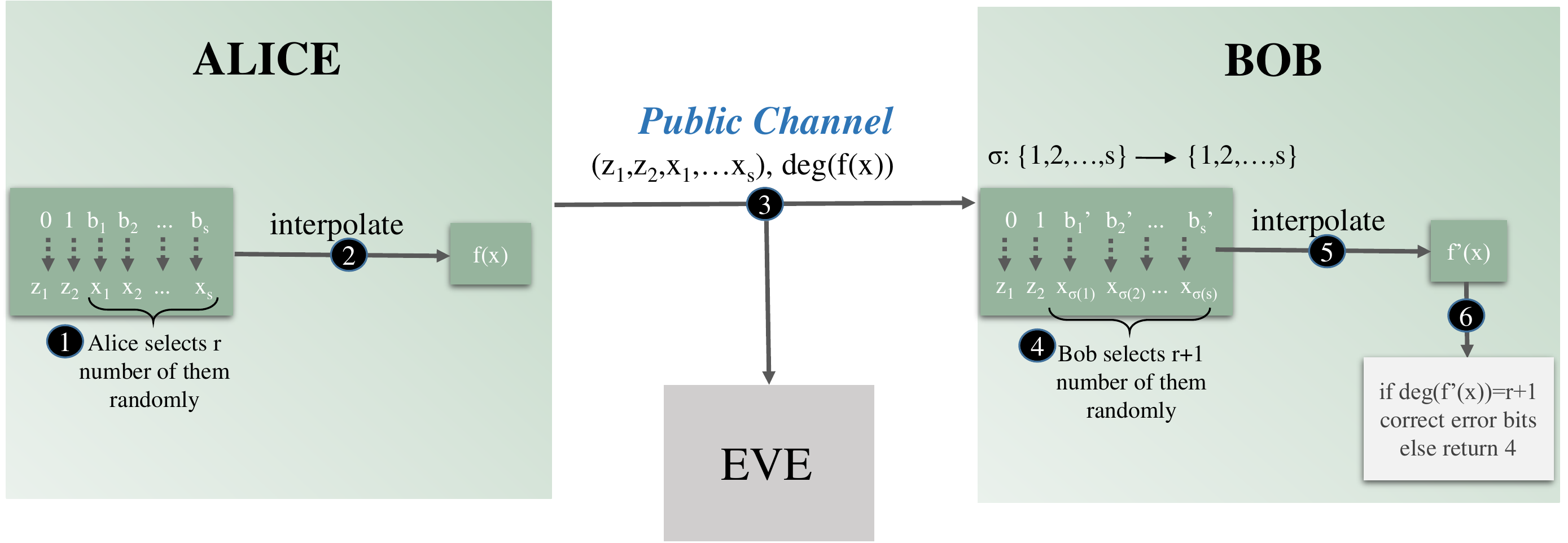}
	\end{center}
	\caption{Depiction of the proposed key reconciliation process. \label{overflow}}
\end{figure*}

\section{Preliminaries}
The polynomial interpolation along with the root finding are the main ingredients of the proposed protocol. Therefore, the first part of this section is devoted to a brief summary of these ingredients over finite fields, and then we provide the basic concepts of the QKD system.
\subsection{Polynomial Interpolation}
Let $(x_0,y_0),\dots,(x_n,y_n)$ be points on Euclidean space, Newton theorem says there exists a unique polynomial of degree at most $n$ interpolating these points.
\begin{theorem}\label{NTTheorem}
{\protect\cite{KIN}}	Let $(x_0,y_0),\dots,(x_k,y_k)$ be points on the graph of a function $f(x)$. There exists a unique polynomial $p(x)$ of degree $\le k$ such that $p(x_i)=f(x_i)=y_i$ for $i=0,\dots,k$.
\end{theorem}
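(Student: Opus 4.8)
The plan is to prove both existence and uniqueness of the interpolating polynomial. For existence, I would construct the polynomial explicitly using Lagrange basis polynomials. For each index $i \in \{0,\dots,k\}$, define
\begin{equation}
L_i(x) = \prod_{\substack{j=0 \\ j \neq i}}^{k} \frac{x - x_j}{x_i - x_j},
\end{equation}
which is a polynomial of degree exactly $k$ satisfying $L_i(x_j) = \delta_{ij}$ (that is, $L_i(x_i)=1$ and $L_i(x_j)=0$ for $j \neq i$). The key observation making this well-defined is that the points have distinct abscissae $x_i$, so no denominator vanishes; this should be noted as a standing hypothesis, since the statement assumes the points lie on the graph of a function $f$. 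Then set $p(x) = \sum_{i=0}^{k} y_i L_i(x)$. This is a sum of polynomials each of degree $\le k$, hence has degree $\le k$, and evaluating at any node gives $p(x_j) = \sum_{i=0}^{k} y_i L_i(x_j) = y_j = f(x_j)$, establishing the interpolation property.

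For uniqueness, the plan is a standard argument via the fundamental theorem of algebra. Suppose $p(x)$ and $q(x)$ are both polynomials of degree $\le k$ satisfying the interpolation conditions. Consider their difference $d(x) = p(x) - q(x)$, which again has degree $\le k$. At each of the $k+1$ distinct nodes we have $d(x_i) = p(x_i) - q(x_i) = y_i - y_i = 0$, so $d$ has at least $k+1$ distinct roots. A nonzero polynomial of degree $\le k$ can have at most $k$ roots, so $d$ must be identically zero, giving $p = q$.

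The main obstacle, or rather the main point requiring care, is ensuring that the nodes $x_0,\dots,x_k$ are pairwise distinct: both halves of the argument rely on this. In the existence step it guarantees the Lagrange denominators are nonzero, and in the uniqueness step it supplies the $k+1$ distinct roots. Since the points are assumed to lie on the graph of a single-valued function $f$, distinct points must have distinct $x$-coordinates, so this hypothesis is automatically satisfied. I would state this explicitly at the outset. I note also that the whole argument is purely algebraic and transfers verbatim from the Euclidean setting to an arbitrary field, in particular to the finite field used later in the protocol, since it only uses that the coefficient domain is an integral domain where a degree-$k$ polynomial has at most $k$ roots.
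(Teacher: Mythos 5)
Your proof is correct and complete: the Lagrange-basis construction gives existence, and the root-counting argument gives uniqueness, with the distinctness of the $x_i$ properly flagged as the load-bearing hypothesis. The paper itself offers no proof of this theorem --- it cites a reference and merely remarks that the polynomial can be obtained ``if one applies Newton's divided difference method or Lagrange interpolation'' --- so your argument is precisely the standard route the paper gestures at; your closing observation that the argument transfers verbatim to $\mathbb{F}_p$ is a useful addition, since the paper invokes the theorem over finite fields without comment.
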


Finding such an interpolating polynomial $p(x)$ is not a tedious task if one applies Newton's divided difference method or Lagrange interpolation. Restricting the degree of a polynomial provides the uniqueness and it is the main factor that polynomial interpolation has been applied to certain problems in secure digital communication \cite{SHA}. The security depends on the hardness of the following statement.

\begin{proposition}\label{Sec}
	Let $p(x)$ be a polynomial of degree $k$ over a finite field with size $\ge 10^{20}$. The probability of a random point on Euclidean plane being on the graph of $p(x)$ is negligible.	
\end{proposition}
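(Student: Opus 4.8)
The plan is to recast the statement as an elementary counting argument over the underlying finite field, since the word ``plane'' here should be read as the affine plane $\mathbb{F}_q \times \mathbb{F}_q$ over the field $\mathbb{F}_q$ of size $q \ge 10^{20}$, and ``random point'' as a point drawn uniformly from this finite set. Under this reading the probability in question is just the ratio of favorable outcomes (points lying on the graph) to all outcomes (points of the plane), so the entire argument reduces to computing these two cardinalities. First I would fix the sample space: the affine plane over $\mathbb{F}_q$ contains exactly $q^2$ points, since the two coordinates range independently over the $q$ elements of the field.

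Next I would count the graph of $p$. Because $p$ defines a genuine function $\mathbb{F}_q \to \mathbb{F}_q$, its graph $\{(x,p(x)) : x \in \mathbb{F}_q\}$ contains exactly one point for each of the $q$ possible inputs $x$, hence precisely $q$ points in total. A point worth emphasizing is that this count is completely independent of the degree $k$; what matters is only that $p$ assigns a single output to each input. Combining the two counts, the probability that a uniformly random point of the plane lands on the graph is
\begin{equation}
\frac{q}{q^2} = \frac{1}{q} \le \frac{1}{10^{20}},
\end{equation}
which is negligible in the cryptographic sense, completing the argument.

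The main subtlety lies not in the arithmetic but in the modeling step: one must commit explicitly to interpreting ``Euclidean plane'' and ``random point'' as the finite affine plane equipped with the uniform distribution. Once that choice is made, the counting is immediate and the bound $1/q$ follows with no further work, so the only genuine obstacle is conceptual, namely justifying that this discrete model is the intended one. I would close by noting that the degree $k$ enters the security picture not through this probability but through Theorem~\ref{NTTheorem}: an adversary who collected $k+1$ true graph points could reconstruct $p$ uniquely, so the role of the bound above is to show that stumbling onto even a single correct point purely by chance is already infeasible.
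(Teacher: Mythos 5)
Your argument is correct, and it is in fact more substantive than what the paper provides: the paper's ``proof'' of Proposition~\ref{Sec} is simply a pointer to Shamir's secret-sharing paper \cite{SHA}, with no argument given in the text. Your counting argument --- $q$ graph points out of $q^2$ points in the affine plane $\mathbb{F}_q\times\mathbb{F}_q$, giving probability $1/q\le 10^{-20}$ --- is the standard elementary justification and is exactly the reasoning implicit in Shamir's construction, so you have essentially supplied the proof the citation is standing in for. You are also right to flag the modeling step as the only real content here: the statement's phrase ``Euclidean plane'' is a misnomer for the finite affine plane (the paper itself concedes in the surrounding discussion that the relevant polynomials live over a finite field, and over $\mathbb{R}$ the probability would be zero for a continuous distribution anyway), and your observation that the degree $k$ plays no role in the count --- entering only through the $k+1$-point reconstruction threshold of Theorem~\ref{NTTheorem} --- is a genuine clarification that the paper does not make explicit. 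The only thing your write-up buys over the citation is self-containedness and the explicit identification of the uniform-distribution assumption; what the citation buys the authors is brevity and an appeal to a well-known source. No gaps.
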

\begin{proof}
	See \protect\cite{SHA}
\end{proof}

For example, if the polynomial is over the real field $\mathbb R$, it is even impossible to find a point on the graph of $p(x)$ without knowing the actual polynomial. On the other hand, the polynomials that are going to be employed for our purposes lie in a finite field. The selection of a large size finite field is going to give confidence that a random point will not be on the graph. That is the reason for utilizing polynomials in secret sharing and quantum key exchange. In other words, let $p(x)$ be a polynomial of degree $k$ over the real numbers $\mathbb R$, it is impossible to construct $p(x)$ even if $k$ points on the graph of it is known. 

\subsection{Root Findings} \label{RootFinding}
Let $\mathbb F_p$ be a finite field with $p$ number of elements where $p$ is a prime integer. Let $f(x)$ be a polynomial of degree $k$ defined over $\mathbb F_p$. The proposed algorithm will seek roots of $f(x)$ over the field $\mathbb F_p$ or any extension of it. The polynomial $f(x)$ is constructed randomly, therefore it is hard to predict where its roots lie. On the other hand, several mathematical tools allow one to locate an extension field of $\mathbb F_p$ where the roots of $f(x)$ belong.  Note that, the $\gcd(a,b)$ denotes the greatest common divider of $a$ and $b$.
\begin{theorem} \label{FT}
	Let $f(x)$ be polynomial of degree $k$ in $\mathbb F_p[x]$. The degree of $\gcd(x^p-x,f(x))$ gives the number of roots of $f(x)$ which lie on $\mathbb F_p$.
\end{theorem}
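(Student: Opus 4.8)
The plan is to exploit the complete factorization of $x^p - x$ over $\mathbb{F}_p$ and to recognize the $\gcd$ as the product of exactly those linear factors shared by both polynomials.

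First I would record the classical factorization $x^p - x = \prod_{a \in \mathbb{F}_p}(x-a)$. By Fermat's little theorem every $a \in \mathbb{F}_p$ satisfies $a^p = a$, so each of the $p$ field elements is a root of the degree-$p$ polynomial $x^p - x$; since a polynomial of degree $p$ over a field has at most $p$ roots, these account for all of them, and the factorization into $p$ \emph{distinct} linear factors follows. In particular, $x^p - x$ is squarefree.

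Next I would pass to the algebraic closure $\overline{\mathbb{F}_p}$ and write $f(x) = c \prod_{i}(x - \alpha_i)^{e_i}$, where the $\alpha_i$ are the distinct roots of $f$ and $e_i \ge 1$ their multiplicities. A linear factor $(x - a)$ with $a \in \overline{\mathbb{F}_p}$ divides both $f(x)$ and $x^p - x$ precisely when $f(a) = 0$ and $a \in \mathbb{F}_p$, that is, exactly when $a$ is a root of $f$ lying in $\mathbb{F}_p$. Since the greatest common divisor is unchanged under passing to a field extension, I may read it off from the factorizations over the closure: it equals $\prod_{a}(x - a)$, the product taken over the distinct roots $a$ of $f$ that lie in $\mathbb{F}_p$. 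The degree of this product is therefore exactly the number of such roots, which is the assertion.

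The one point demanding care — and the step I expect to be the main obstacle — is the treatment of multiplicities. Even if some $\mathbb{F}_p$-root $a$ occurs with $e_i > 1$ in $f$, the factor $(x-a)$ appears only to the first power in $x^p - x$, so the $\gcd$ retains it with multiplicity one. This is precisely why the degree of the $\gcd$ counts roots \emph{without} multiplicity, and I would state explicitly that "number of roots" is to be read as the number of distinct roots in $\mathbb{F}_p$. Establishing that the $\gcd$ truncates each repeated factor to exponent one is the crux on which the degree count rests.
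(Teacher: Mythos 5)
Your proof is correct and follows the same route the paper intends: the paper's proof is a one-line appeal to Fermat's little theorem (i.e.\ $a \in \mathbb{F}_p$ iff $a^p \equiv a$), and your argument is exactly the fleshed-out version of that, via the factorization $x^p - x = \prod_{a \in \mathbb{F}_p}(x-a)$. Your explicit handling of multiplicities --- noting that $x^p-x$ is squarefree, so the $\gcd$ counts \emph{distinct} roots in $\mathbb{F}_p$ --- is a detail the paper glosses over entirely, and it is worth stating.
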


\begin{proof}
	This is an easy consequence of Fermat's little theorem which basically says $a\in \mathbb F_p$ if and only if $a^p\equiv a \mod p$. 
\end{proof}  
Theorem \ref{FT} suggests a method to decide where the other roots lie. For example, the degree of $\gcd(x^{p^2}-x,g(x))$ gives the number roots in the degree 2 extension field of $\mathbb F_p$. Similarly, computing $\gcd(x^{p^e}-x,f(x))$ for $e=2,\dots,k$ gives the places of all roots of $f(x)$. This process is in general called distinct degree factorization, see [Section 1.6]\cite{COH} for more details. Once the location of roots is established, root-finding algorithms can be applied. Consider the distinct degree factorization gives an extension degree of $\mathbb F_p$ where the roots are located. For example, if some roots are in $\mathbb F_{p^{e_1}}$ and the rest is in $\mathbb F_{p^{e_2}}$ then root-finding algorithms should be performed for the field $\mathbb F_{p^{\text{lcm}(e_1,e_2)}}$ where $\text{lcm}(e_1,e_2)$ is the least common multiple of $e_1$ and $e_2$. \\
\indent Once the extension field is determined, the next aim to find all roots. Let $f(x)$ be the polynomial over $\mathbb F_p$ such that all of its roots lie in an extension $\mathbb F_q$ where $q=p^e$. There are various algorithms to find a root of $f(x)$ in $\mathbb F_q$. For example, a root-finding algorithm defined in [Section 1.6]\cite{COH} describes a probabilistic method to determine at least one root of $f(x)$. Basically, it suggests randomly selecting a member $\alpha \in  \mathbb F_q$ and computing $\gcd(x^{\frac{q-1}{2}}-\alpha,f(x))$. This might result in a root of $f(x)$. On the other hand, the proposed algorithm needs to compute more than one root, this method would not be effective for our purposes. Therefore, at one point our method is going to employ a polynomial factorization algorithm. \\
\indent The practical polynomial factorization algorithms are all probabilistic \cite{BER, CAN-ZESS, OZD}. Let $f(x)$ be a reducible polynomial over $\mathbb F_p$ where $p$ is a prime number.  Berlekamp's algorithm and Cantor-Zassenhaus algorithm first searches a polynomial $h(x)$ such that $$h(x)^p\equiv h(x) \mod p$$ Then, the algorithms decides an element $c\in \mathbb F_p$ and computes $\gcd(f(x),h(x)-c)$. The probability of success for both algorithms is around $1/2$. On the other hand, the algorithm described in \cite{OZD} works on any extension field and the probability of success is always more than $3/4$. The algorithm employs singular curves and their Jacobian groups.

\subsection{Quantum Key Distribution}
\textcolor{black}{In this part, we provide the basics of QKD. For the following of this part, we adopted the notation and equations in \cite{martinez2013key}.}
As shown in Figure \ref{fig:QKD_system}, a QKD system consists of two links: $(i)$ a quantum channel for the key agreement process and $(ii)$ an authenticated public communication channel for the key distillation process. In the key agreement process, Alice transforms the generated key bit string $K_A$ of length $2L$ into qubits and shares them via the quantum channel. Bob measures these qubits and maps measurement results into his key bit string $K_B$ of length $2L$. Due to the characteristics of the quantum channel and possible eavesdropping activity, Bob does not know the accuracy of his measurements and needs to estimate the disparities between the $K_A$ and $K_B$. Key distillation is a post processing step that is used to obtain identical secret keys at Alice and Bob. The process starts with key sifting.  In key sifting, Alice and Bob share half of the randomly selected bits and estimate quantum bit error rate (QBER) denoted by $\epsilon$. After key distillation, if QBER is lower than the error toleration rate $\Gamma$, Alice and Bob remove the shared bits, and each of them starts key reconciliation with $L$ number of bits. The remaining bit string at Alice is denoted by $A$, and the remaining bit sitring at Bob is denoted by $B$. The proposed key reconciliation process is defined by $R^{\epsilon}(A,B)=[\mathcal{S},Y]$, where the final secret key string $\mathcal{S}$ is obtained from the correlated bit strings $A$ and $B$ by exchanging information string $Y$.

QBER can be evaluated by $$ \epsilon=\frac{p_d}{p_{exp}},$$ where $p_d$ is the dark count rate. $p_{exp}$ is the total photon detection rate, and $$p_{exp}=p_{signal}+p_d-p_{signal}p_d.$$ $p_{sig}$ denotes the signal detection rate, and can be approximated by $$p_{signal}=\mu t \eta,$$ where $\mu$, $t$ and $\eta$ denote respectively the quantum efficiency of the detector, transmitivity and the average number of emitted photons per pulse. In order to obtain secret key rate in the following sections, an upper bound for the fraction of detected single photons, $\Upsilon_1$, can be given by $$\hat{\Upsilon_1}= 1- \frac{p_{multi}(\mu)}{p_{exp}} \leq \Upsilon_1.$$
$p_{multi}(\mu)=1-(1+\mu)e^{-\mu}$ denotes the probability of emitting two or more photons. The error rate of single photon pulses is upper-bounded by $\varepsilon=\frac{\epsilon}{\hat{\Upsilon_1}}$.

$\mathcal{H}(A)$ denotes the Shannon entropy of the binary random variable $A\in\{a_1,a_2\}$, where the probabilities $P(A=a_1)=\theta$ and $P(A=a_2)=1-\theta$. $\mathcal{H}(\theta)$ can be given by
$$\mathcal{H}(\theta)=-\theta\log_2(\theta)-(1-\theta)\log(1-\theta).$$
\textcolor{black}{
	By using the provided equalities, we can estimate the QBER performance of any designed QKD system by simply adapting the physical parameters of the test system.}
\section{The Proposed Key Reconciliation Protocol}
As described in Section II.C., Alice and Bob start the key reconciliation with $L$ number of bits. As a first step Alice and Bob divide their respectively keys $A$ and $B$ into $m$ number of blocks with length of $s$, where $L=ms$. The key reconciliation operation is the same for all blocks so let us assume that both Alice and Bob work on the first block. That is Alice's side block is $$B_A=01b_1b_2\dots b_s$$ and Bob's side block is $$B_B=01b_1'b_2'\dots b'_s.$$

Suppose the error toleration rate is $\Gamma$ and  $$r=\lfloor s-s\cdot {\Gamma}\rfloor$$ In literature, $\Gamma$ can be at most 0.15 \cite{rec-compare} and in our case we assume $\Gamma$ is less than 0.30.  Alice randomly selects $r+2$ number of distinct elements $z_1,z_2, x_1,x_2,\dots,x_r \in \mathbb F_p$  and sets up the pairs:
$$(z_1,0),(z_2,1),(x_1,b_1), (x_2,b_2),\dots,(x_{r},b_r)$$

Then she constructs $f(x)$ of degree less than $r+2$ interpolating the above points. In the next step, she solves:
$$\begin{array}{cccc}
f(x)&=&b_{r+1}\\
f(x)&=&b_{r+2}\\
\cdot & \cdot & \cdot\\
\cdot & \cdot & \cdot\\
\cdot & \cdot & \cdot\\
f(x) &=&b_{s}
\end{array}$$
for $x$. Suppose she obtains $x_{r+1},x_{r+2},\dots,x_{s}$. Note that $x$ coordinates must be distinct.
Alice then broadcasts $z_1,z_2,x_1,x_2,\dots,x_{s}$. The preselection of first  two bits 0 and 1 prevents the polynomial $f(x)$ being a constant polynomial in case all $b_i$ are the same. Note that the probability of degree $f(x)$ being different than $r+1$ is negligible by Proposition 1. Therefore, we might assume degree of $f(x)$ is $r+1$ from now on.  If $s-(r+1)$ is larger than the $r+1$ and $b_{r+1}=b_{r+2}=\dots=b_{s}$ then the roots $x_{r+3},\dots,x_{s}$ would not be exist. That is the reason, we assume the number $\Gamma$ is always less than 30.  

\begin{algorithm}[tb]  
	\caption{: Key Established Algorithm: Alice's Side}           
	\begin{algorithmic}[1]   
		\renewcommand{\algorithmicrequire}{\textbf{Input:}}
		\renewcommand{\algorithmicensure}{\textbf{Output:}}
		\REQUIRE A block of bit string $B_A$ with size $s$.
		\ENSURE $(z_1,z_2,x_1,x_2,\dots, x_{s})\in \mathbb F_p^{s+2}$ and the degree $r+1$.
		\STATE Select random elements $z_1,z_2\in \mathbb F_p$ and a random permutation  function $$\delta:\{1,2,\dots,s\}\rightarrow \{1,2,\dots,s\}$$
		Then again randomly select $x_{\delta(1)},x_{\delta(2)},\dots,x_{\delta(r)}\in\mathbb F_p$.
		
		\STATE Construct pairs:
		$$ (z_1,0),(z_2,1),(x_{\delta(1)},b_{\delta(1)}), (x_{\delta(2)},b_{\delta(2)}),\dots, (x_{\delta(r)},b_{\delta(r)})$$
		\STATE Find a polynomial $f(x)\in \mathbb F_p[x]$ interpolating above set of pairs.
		\STATE Set 
		$$\begin{array}{cccc}
		f(x)&=& b_{\delta(r+1)}\\
		f(x)&=&b_{\delta(r+2)}\\
		\cdot &\cdot &\cdot\\
		\cdot &\cdot &\cdot\\
		\cdot &\cdot &\cdot\\
		f(x)&=&b_{\delta(s)}
		\end{array}$$
		and find $x_{\delta(r+1)},\dots,x_{\delta(s)}$.
		\RETURN The set $(z_1,z_2,x_1,\dots,x_s)$ and the degree $r+1$.	
	\end{algorithmic}
	\label{AliceAlg} 
\end{algorithm}

\indent Bob receives the sequence $z_1,z_2,x_1,x_2,\dots, x_{s}$ from Alice in an open network. Bob's side constructed $B_B$ via quantum channel. He  selects  a random permutation function $$\sigma:\{1,2,\dots,s\}\rightarrow \{1,2,\dots,s\}$$  and then constructs set of pairs $$ (z_1,0),(z_2,1),(x_{\sigma(1)},b'_{\sigma(1)}),\dots, (x_{\sigma(r)},b'_{\sigma(r+1)})$$
Bob then finds a polynomial $f'(x)$ interpolating all these pairs. Once the degree of $f'(x)$ is $r+1$, then Bob concludes his polynomial is the same as Alice's one. Note that Bob interpolates $r+3$ pairs and if they are not on the graph of Alice's polynomial, the probability that Bob's polynomial is of degree $r+1$ is negligible by Proposition 1. Constructing the same function with Alice allows Bob to remove error terms in the bit string in his side. Suppose his $j^{\text{th}}$ bit doesn't match with the $j^{\text{th}}$ bit in Alice's side, that is $b_j\ne b'_j$. He just observes this discrepancy while computing $f(x_j)$.

\begin{algorithm}[tb]  
	
	\caption{: Key Established Algorithm: Bob's Side}           
	\begin{algorithmic}[1]   
		\renewcommand{\algorithmicrequire}{\textbf{Input:}}
		\renewcommand{\algorithmicensure}{\textbf{Output:}}
		\REQUIRE A block of bit string $B_B$ and $(z_1,z_2,x_1,\dots,x_{s}) \in\mathbb F_p^{s+2}$ and the degree $ r+1. $
		\ENSURE Compare $B_B$ to $B_A$ and correct error bits.
		\STATE Select random permutation function $$ \sigma:\{1,2,\dots, s\} \rightarrow  \{1,2,\dots, s\} $$ and $r$ elements in $(x_{\sigma{(1)}},\dots,x_{\sigma{(s)}})$
		
		\STATE Construct pairs $$ (z_1,0),(z_2,1),(x_{\sigma(1)},b'_{\sigma(1)}),\dots, (x_{\sigma(r+1)},b'_{\sigma(r+1)})$$ and obtain a polynomial $ f'(x)\in  \mathbb F_p[x]$ interpolating these pairs.
		\IF {$ \deg f'(x)==r+1  $ } 
		
		\FOR {$ i=r+2 \text{ to } s$}
		\IF {$ f'(x_{\sigma(i)})== b'_{\sigma(i)}$}
		\STATE keep $ b'_{\sigma(i)}$
		\ELSE
		\STATE replace $ b'_{\sigma(i)}$  with $ f'(x_{\sigma(i)}). $
		\ENDIF
		\ENDFOR
		
		\ELSE 
		\STATE return Step 1.
		
		\ENDIF
		\RETURN The bit string $B_A$.
	\end{algorithmic}
	\label{BobAlg}             
\end{algorithm}

The key is established at Alice's side using  Algorithm 1. The integer $s$ represents the number of elements in each block.  First, Alice selects two random elements $z_1,z_2\in \mathbb F_p$ and a random permutation map $\delta$ on $\{1,2,\dots,s\}$. Then, she also randomly selects $x_{\delta(1)},x_{\delta(2)},\dots,x_{\delta(r)}\in\mathbb F_p$ (Step 1). Following that, she sets up $r+2$ pairs $$ (z_1,0),(z_2,1),(x_{\delta(1)},b_{\delta(1)}), (x_{\delta(2)},b_{\delta(2)}),\dots, (x_{\delta(r)},b_{\delta(r)})$$ (Step 2). In the next step, she constructs the interpolation polynomial $f(x)\in \mathbb F_p[x]$ with degree $r+1$ (Step 3). By using root findings methods, she determines the remaining pairs $$ (x_{\delta(r+1)},b_{\delta(r+1)}), (x_{\delta(2)},b_{\delta(2)}),\dots, (x_{\delta(s)},b_{\delta(s)})$$ (Step 4). At the end, the algorithm outputs the set $(z_1,z_2,x_1,\dots,x_s)$ and the degree $r+1$.

The key established method of Bob's side is given in Algorithm 2. Initially, Bob selects a random permutation map $\sigma$ on $\{1,2,\dots,s\}.$ He also selects $r$ elements in $x_{\sigma{(1)}},\dots,x_{\sigma{(s)}}$(Step 1). Upon completion of previous step, he sets up $r+3$ pairs $$ (z_1,0),(z_2,1),(x_{\sigma(1)},b'_{\sigma(1)}),\dots, (x_{\sigma(r+1)},b'_{\sigma(r+1)})$$ and finds the interpolation polynomial $ f'(x)\in  \mathbb F_p[x]$ (Step 2). Since Bob's aim is to find the Alice's polynomial, he checks whether $\deg f'(x)$ is $r+1$. He keeps changing the selection of $r+1$ pairs in  $(x_{\sigma(1)},b'_{\sigma(1)}),\dots, (x_{\sigma(s)},b'_{\sigma(s)})$ until he gets  $\deg(f'(x))=r+1$. In the remaining part of Algorithm 2, he checks the equation $ f'(x_{\sigma(i)})= b'_{\sigma(i)}$ for $ i=r+2 ,\dots, s.$  If the equation satisfies, he concludes that the bit $b'_{\sigma(i)}$ matches with Alice's side. In the other case, he changes the bit $b'_{\sigma(i)}$ with $ f'(x_{\sigma(i)}).$ At the end of the process, Bob finds  wrong bits in his side and corrects them. 
\section{Performance Analysis}
\subsection{Computation Complexity}
\subsubsection{Complexity of Algorithm \ref{AliceAlg}}
\indent Algorithm \ref{AliceAlg} first randomly selects element in finite field $\mathbb F_p$. Then pairs up $r$ of these randomly selected elements with the generated bits. Then interpolates $r+2$ pairs and constructs a polynomial $f(x)$ of degree $r+1$. The next steps requires to find roots of $f(x)$ over an extension field $\mathbb F_q$ of $\mathbb F_p$. In terms of running time, finding the roots of a polynomial of degree $r+1$ is much more costly than interpolating $r+2$ pairs. Therefore, with respect to running time, the dominating step of the algorithm is the root finding step. Finding roots of a polynomial $f(x)$ over $\mathbb F_p$ requires several sub-steps as mentioned in the Subsection \ref{RootFinding}. The first sub-step is to locate the extension field which contains the roots of $f(x)$. This requires performing several greatest common divisor algorithm for polynomials. There exist many algorithms to compute the greatest common divisor between two polynomials. In the respect of the cost of computation, calculating $ \gcd $ of two polynomials of degree at most $ r $ in $\mathbb F_p$ takes $\mathcal{O}((2r^2+r)\log p)$ operations with the algorithm given in [Section 6.9] \cite{GAT} and \cite{CF}. We prefer to apply a polynomial factorization algorithm to find roots of $f(x)$ over the extension field $\mathbb F_q$ where the all roots lie and $q=p^e$ for some integer $e>0$. Since most of the time all roots will not be in $\mathbb F_p$ instead they will be an extension field of it, the algorithm defined in \cite{OZD} is more convenient to be employed. Therefore, the polynomial factorization algorithm to find all roots will costs at most $\mathcal O(r^4\log p)$. This is because the algorithm's running time is approximately $\mathcal O(r^3\log q)$ and $q\le p^r$. Overall, the complexity of Algorithm \ref{AliceAlg} is $\mathcal O(r^4\log p)$ operations.
\subsubsection{Complexity of Algorithm \ref{BobAlg}}
The algorithm running on the receiver's side is similar to the algorithm presented in \cite{KUR} which costs $\mathcal O(r^2)$ operations.

\subsection{Frame Error Rate}
In error correction coding, FER is equal to zero if the codeword is correct, and equal to one even if one-bit mismatches in the codeword. Similarly, in QKD, FER is only equal to zero if all erroneous bits are correct after key reconciliation. \textcolor{black}{However considering the variety of error patterns and random error distribution, FER becomes the expected value of a random variable, where the outcome of the variable is equal to 0 for a successful reconciliation and 1 for a fail.  As described in Section III, Bob reconciles a block of the secret key by finding the degree of the polynomial, consequently, the polynomial itself, if he can correctly decode at least $r+1$ number of bits in a block of $s$ number of bits. In this case, considering a single block, FER is equal to the probability of having at least $r+1$ number of correct bits in the block of Bob's key. Considering $k$ consecutive blocks, FER for the key reconciliation protocol becomes the $k^{\text{th}}$ power of the aforementioned probability. Then}, assuming quantum channel as a binary symmetric channel with error probability $\epsilon=\mathrm{QBER}$ \cite{martinez2013key}, \cite{scarani2009security}, the exact FER for our algorithm can be given by 
\begin{equation}
\mathrm{FER}=1-\left(\sum_{z=r+1}^{s}(1-\epsilon)^{z}(\epsilon)^{l-z}\right)^k,
\end{equation}
\textcolor{black}{
	where $\epsilon$ can be evaluated as in Subsection II.C.
	One observation from this equation is that FER would increase at the same QBER as we divide the same key into more blocks. This property provides a trade-off between FER and complexity performance. Besides, a trade-off between FER and the security level can be observed from the selection of the $r$. As we select  larger $r$ values, the protocol becomes more prominent to eavesdropping attacks, while FER performance reduces since Bob requires to find more correct bits at his block.} FER is one of the main indicators for the key reconciliation process. In the following section, we will provide a comparison of our algorithm and error-correcting codes in terms of $\mathrm{FER}$. 

\subsection{Analysis of the Information Leakage}
As depicted in Figure \ref{fig:QKD_system}, during the QKD process, the information regarding to key can be disclosed with two different information exchange processes: key agreement part in the quantum channel and key reconciliation part in the standard communication  channel  \cite{scarani2009security}. The total amount of disclosed information to Eve is referred as information leakage and denoted by
$$
I_{lk}\triangleq I(A;E),
$$
where $I(A;E)$ denotes the mutual information of the secret at  Alice and  Eve. Then, we can state that
$$I_{lk}(\mathcal{S})=I^{q}_{lk}(\mathcal{S})+I^{p}_{lk}(\mathcal{S}).$$ Here $I^{q}_{lk}(\mathcal{S})$ is the amount of leaked information  during the quantum key agreement part where Alice and Bob use the quantum communication link. $I^{p}_{lk}(\mathcal{S})$ is the amount of leaked information during the key reconciliation part where Alice and Bob use the classical public link.

The revealed information rate at quantum key agreement part can be given by \cite{martinez2013key}
$$I_{lk}^{q}(\mathcal{S})=\Upsilon_1\mathcal{H}(\varepsilon)L.$$ 

During the key reconciliation, the transmission medium is assumed to be an authenticated public channel, where any transmitted information during the key reconciliation process is assumed to be obtained from Eve. In the existing key reconciliation schemes, Alice and Bob aim correcting erroneous bits at Bob by sharing side information about the secret key. According to Slepian-Wolf limit \cite{Wolf}, revealed information for the key reconciliation processes with error control coding codes is lower bounded by 
\begin{equation}
\mathcal{H}(\epsilon)\leq\rho_{\text{leak}}.
\end{equation}

By intuition, Alice reveals the minimum amount of information if she knows Bob's information about the secret. This assures information leakage during the key reconciliation process as long as $\epsilon>0$. More specifically, the revealed information in an $(n,k)$ error correcting code is $I_{lk}^{p}(\mathcal{S})=n-k$, and the revealed information rate is described  by
\begin{equation}
\rho_{\text{leak}}= \frac{n-k}{n}.
\end{equation}

On the other hand, we block the information leakage during the key reconciliation process. By lower bounding the degree of the polynomial with the revealed information to Eve in quantum channel, Eve cannot interpolate the polynomial with her limited vertical axis observations and horizontal axis values shared in key reconciliation process.
\begin{theorem}
	Revealed information for our proposed key reconciliation process is zero for $\Gamma\leq\frac{r-1}{n}$.
\end{theorem}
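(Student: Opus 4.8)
The plan is to recast the claim as a mutual-information statement and then reduce it to a counting argument about how many points on the graph of $f(x)$ Eve can possibly hold. First I would fix the observables: during reconciliation Alice broadcasts only the abscissae $z_1,z_2,x_1,\dots,x_s$, while every ordinate (the bits $b_i$) is withheld, so Eve's public transcript is exactly $Y=(z_1,z_2,x_1,\dots,x_s)$. Since $I_{lk}^{p}(\mathcal S)$ is by definition $I(A;Y)$, the target reduces to proving $I(A;Y)=0$, i.e. that $Y$ is statistically independent of the secret block $A=b_1\cdots b_s$.

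Next I would count the genuine graph points available to Eve. The two conventional pairs $(z_1,0)$ and $(z_2,1)$ have ordinates fixed by the protocol, so Eve holds these two points in full; every other pair $(x_i,b_i)$ is known to her only if she already learned the bit $b_i$ during the quantum stage. The hypothesis $\Gamma\le\frac{r-1}{n}$ (with $n$ the block length) is exactly the bound that caps Eve's quantum-stage bit knowledge at $n\Gamma\le r-1$ ordinates, since staying below the toleration rate $\Gamma$ forbids her from having touched more than a fraction $\Gamma$ of the qubits without detection. Hence Eve can assemble at most $(r-1)+2=r+1$ points lying on the graph of $f(x)$.

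I would then invoke uniqueness of interpolation together with Proposition \ref{Sec}. Since $\deg f=r+1$, Theorem \ref{NTTheorem} requires $r+2$ graph points to determine $f$ uniquely; with only $r+1$ points Eve is left with a one-parameter family of admissible degree-$(r+1)$ polynomials and cannot single out Alice's $f$. By Proposition \ref{Sec}, over a field of size at least $10^{20}$ the probability that any guessed pair $(x_i,b)$ actually lies on the true graph is negligible, so Eve can neither test nor recover any withheld ordinate. Each unknown bit therefore remains uniformly uncertain conditioned on $Y$, giving $H(A\mid Y)=H(A)$ and hence $I(A;Y)=0$, which is the assertion.

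The main obstacle is the step tying the physical bound to the combinatorial count, together with the residual dependence of the solved abscissae on the bits. The roots $x_{r+1},\dots,x_s$ returned by solving $f(x)=b_i$ genuinely depend on $b_{r+1},\dots,b_s$, so I must argue that, relative to Eve's ignorance of $f$, the randomisation of $f$ through the uniformly chosen $z_1,z_2,x_1,\dots,x_r$ and the permutation $\delta$ renders the induced distribution of these abscissae independent of the bits. Proposition \ref{Sec} supplies the heavy lifting here, but upgrading a ``negligible advantage'' to the exact independence $I(A;Y)=0$, and pinning down precisely why $n\Gamma$ is the correct ceiling on Eve's quantum-stage knowledge, is the delicate part of the argument.
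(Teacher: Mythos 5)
Your proposal takes essentially the same route as the paper's proof: both reduce the claim to showing $I(\mathcal{S};Y)=0$ and then apply the Shamir threshold argument --- with $\Gamma\le\frac{r-1}{n}$ capping Eve's quantum-stage knowledge at $r-1$ ordinates, she holds too few graph points to pin down the degree-$(r+1)$ interpolating polynomial, so the conditional entropy of the secret given the broadcast abscissae is unchanged. The two difficulties you flag at the end --- that the broadcast values $x_{r+1},\dots,x_s$ are solved from the withheld bits and hence are not obviously distributed independently of them, and that Proposition \ref{Sec} yields only a negligible rather than exactly zero advantage --- are genuine, but the paper's proof leaves them equally unaddressed, asserting $H(\mathcal{S}\mid Y,A_{r-1})=H(\mathcal{S}\mid Y)=H(\mathcal{S})$ directly by appeal to Shamir's scheme.
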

\begin{proof}
	
	During the proposed key reconciliation scheme, the only shared information is the $y$ axis values of the generated polynomial,  $\mathbf{y}$. Let us denote the mutual information of $\mathbf{y}$ and the secret with $I(\mathcal{S};Y)$. Then, $${I}_{lk}(\mathcal{S})\triangleq I(\mathcal{S};Y)= H(\mathcal{S})-H(\mathcal{S}|Y).$$ For the selected $\Gamma\leq\frac{r-1}{n}$, Eve is assumed to obtained at most $(r-1)$ consecutive bits from the Alice's key during the key agreement process. Let us rewrite the uncertainty about the secret at Eve as $H(\mathcal{S}|Y,A_{r-1}),$ where $A_{r-1}$ denotes any permutation of $(r-1)$ number of bits from $A$ and note that from the chain rule $$H(\mathcal{S}|Y,A_{r-1})\leq H(\mathcal{S}|Y).$$ As proved in Shamir's (k,n) thresholding scheme \cite{SHA}, $r-1$ revealed element pairs $(b_i, x_i)$, does not reveal any information about polynomial of degree $r$, where every candidate secret $S$ corresponds to a unique polynomial of degree $r-1$. From the construction of the polynomials, all their probabilities are equal. Thus, if $$H(\mathcal{S}|Y,A_{r-1})=H(\mathcal{S}|Y)=H(\mathcal{S})$$ then, $I(\mathcal{S};Y)=0.$
\end{proof}
\textcolor{black}{
	Reducing information leakage to zero indicates that we can utilize all secret key bits in the final key insead of removing the leaked part. Therefore, secret key throughput performance improves considering all distances and initial QBER values as detailed in the following part.}

\subsection{Secret Key Throughput}
Similar to the goodput analysis in communication networks, secret key throughput indicates the number of available secret key bits generated in a second. Secret key throughput is defined as
\begin{equation}
S=r_sf_{sc},
\label{eq:S}
\end{equation}
where $r_s$ is the secret key rate. $f_{sc}$ denotes the generated secret key bits in a second, more formally referred to as the frequency of the key source. The secret key rate is given by
\begin{equation}
r_s=(1-\mathrm{FER})p_{exp}q\rho, 
\label{eq:s}
\end{equation}
where $\rho$ denotes the secret key fraction with one-way reconciliation schemes and FER is defined in (1). Since we only consider the successfully reconciled secret keys at the end of the QKD process, the rate of the unsuccessful reconciliation processes (in other words FER) is subtracted by the secret key fraction with one-way reconciliation schemes that obtained after the quantum key agreement part. The secret key fraction in our case can be given by
\begin{equation}
\rho=\Upsilon_1(1-\mathcal{H}(\varepsilon))-\rho_{\text{leak}}.
\label{eq:rho}
\end{equation}

(\ref{eq:S})-(\ref{eq:rho}) highlight the effect of $\mathrm{FER}$ and information leakage in key reconciliation to the secrecy throughput of the QKD system. 
\section{Numerical Analysis}
\label{sc:numerical}
In this section, we present the parameters for numerical analyses and the results in parts. First, we present the numerical analyses regarding the quantum key exchange part of the QKD and then we will provide the performance analyses of different key reconciliation protocols considering different quantum link realizations.
\subsection{Quantum Key Agreement}
In numerical analysis, we mainly consider two different implementations of the quantum link: a fiber link and an FSO link. The parameters for these links are given in Table \ref{table:parameters} and are detailed in Section II. Note that, the parameters are directly obtained from the experimental works of QKD. 

\begin{table}[tb]
	\centering
	\caption{The link realization parameters for numerical analysis. The parameters for fiber link and FSO link are obtained from respectively \protect\cite{martinez2013key} and \protect\cite{kurtsiefer2002step}. }
	\begin{tabular}{|l|l|l|}
		\hline
		\textbf{Realization Parameters} & \textbf{Fiber}     & \textbf{FSO}       \\ \hline
		\multicolumn{1}{|l|}{$f_s$ [GHz]} & $1$    & $1$    \\ \hline
		\multicolumn{1}{|l|}{$K$ [km]}  & $0-60$    & $0-30$  \\ \hline
		\multicolumn{1}{|l|}{$q$}                      & $0.5$     & $0.5$     \\ \hline
		\multicolumn{1}{|l|}{$p_d$}                    & $10^{-5}$ & $10^{-5}$ \\ \hline
		\multicolumn{1}{|l|}{$\alpha$  {[}dB/km{]}}     & 0.2      & 0.1      \\ \hline
		\multicolumn{1}{|l|}{$\eta$}                   & 0.1       & 0.15      \\ \hline
		\multicolumn{1}{|l|}{$d_s$ [cm] }               & n.a.      & 25       \\ \hline
		\multicolumn{1}{|l|}{$d_r$ [cm]}               & n.a.      & 25        \\ \hline
		\multicolumn{1}{|l|}{$D$}                      & n.a.      & 3.21    \\ \hline
	\end{tabular}
	
	\label{table:parameters}
\end{table}

The transmitivity  of the fiber link is given by
\begin{equation}
t^{\text{fiber}}=10^{\frac{-\alpha K}{10}},
\end{equation}
where $K$ is the distance in km \cite{martinez2013key}. The transmitivity of the FSO link is given by
\begin{equation}
t^{\text{FSO}}= {\left(\frac{d_r}{d_s+Dl}\right)}^2 10^{\frac{-\alpha K}{10}},
\end{equation}
where $d_s$ and $d_r$ are the apartures of the transmitting and receiving telescopes. $D$ is the divergence of the beam. Note that, $d_s$, $d_r$ and $D$ are not applicable for the fiber link. QBER values with corresponding distances for the fiber and the FSO links are given in Figure \ref{fig:QBER}. Even though the QBER performance of FSO link is better than fiber in  short distance regime (0-3 km), QBER increases faster in FSO link in comparison with fiber link. Therefore, the key reconciliation becomes more challenging especially in the long distance FSO links. 
\begin{figure}[t]
	\centering
	\includegraphics[width=\linewidth]{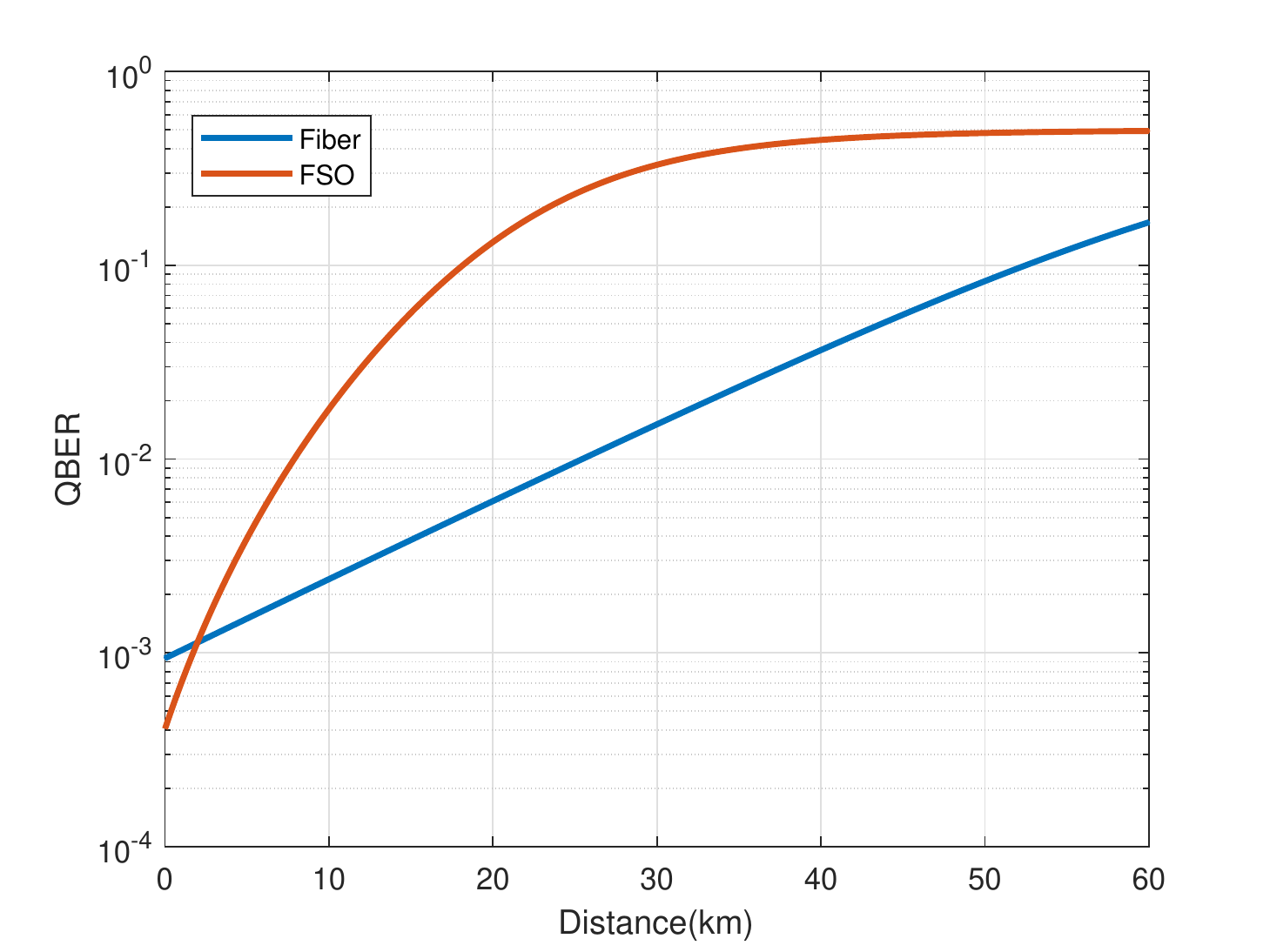}
	\caption{The relationship between QBER and link distance for fiber and FSO links.}
	\label{fig:QBER}
\end{figure}
\subsection{Key Reconciliation}
We consider two different LDPC codes and an asymptotic limit for the error correcting codes to compare the performance of our algorithm. LDPC-1 and LDPC-2 codes in this paper directly correspond to respectively (1008, 504) regular LDPC Code-3 and (1998, 1776) regular LDPC Code-4 in \cite{LDPC_FER}. We also evaluate the $\mathrm{FER}$ values for the corresponding codes from the Eq. (5) of \cite{LDPC_FER} with the parameters as in Table 1 of \cite{LDPC_FER}. The leaked information for LDPC codes are obtained by Eq. (3). We also consider the asymptotic performance for the ECC based key reconciliation schemes. By setting $\rho_{\text{leak}}$ as in Eq. (2) for the asymptotic ECC case, we assume that the leaked information equals to the Slepian-Wolf limit. For our algorithm, we consider $s=100$, $m=10$ and $r=30$.
\begin{figure}[tb]
	\centering
	\includegraphics[width=\linewidth]{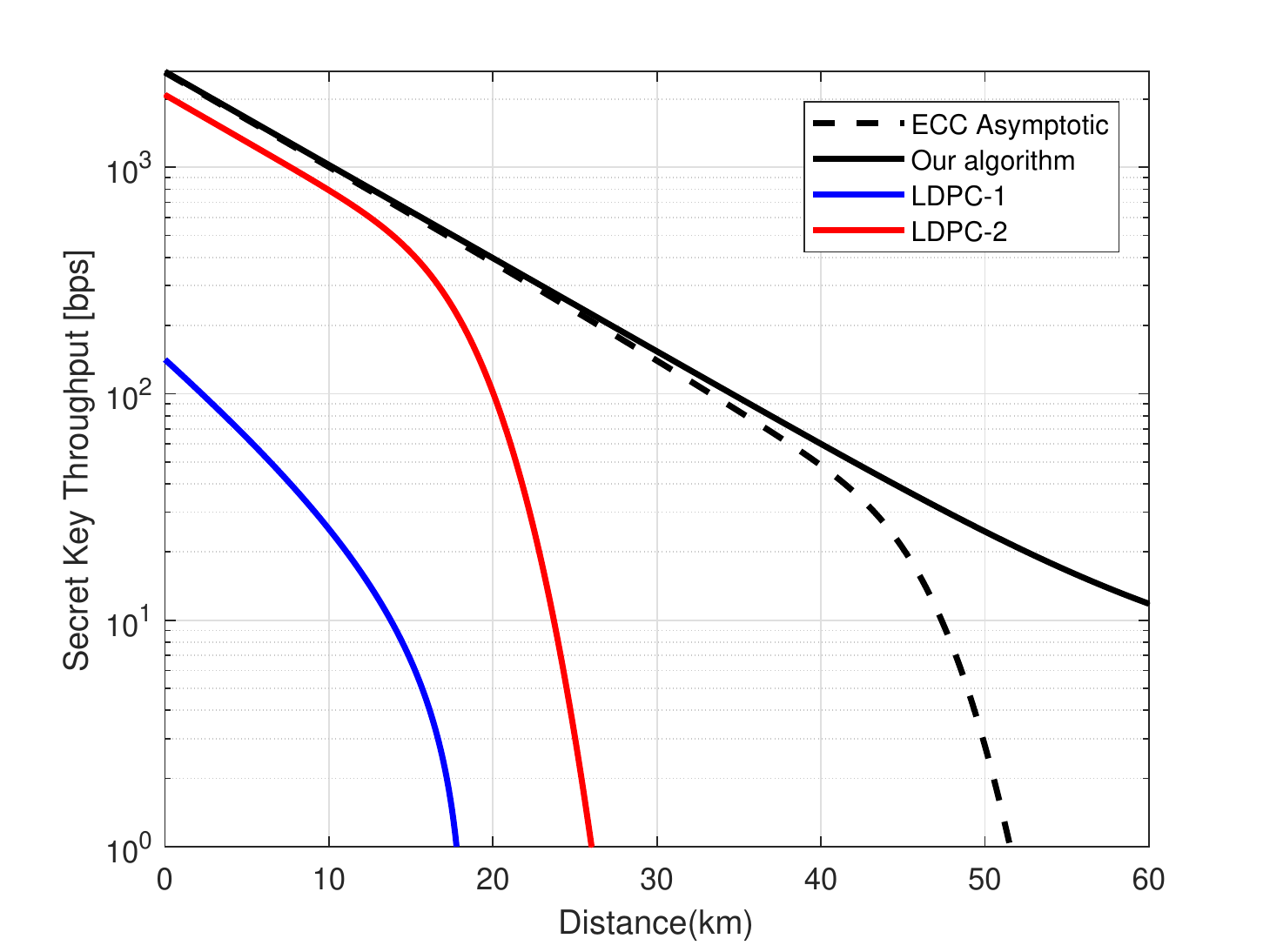}
	\caption{Secret key throughput comparison of different key reconciliation algorithms considering fiber links. 
	}
	\label{fig:fiber}
\end{figure}
Figure \ref{fig:fiber} shows the change in secret key throughputs with link distance considering 2 different LDPC codes, our algorithm and asymptotic performance of ECC. As it can be observed from the figure, the proposed algorithm outperforms other coding schemes and their asymptotic limit. Since our algorithm provides zero information leakage in the key reconciliation process, it outperforms other methods in the short distance region. In other words, the proposed algorithm provides the same secret key throughput at longer link distances than the ECC based key reconciliation schemes. For example, the proposed algorithm provides 100 bps secret key throughput at 33 km, while LDPC-1 and LDPC-2 provide the same throughput value at respectively  2 and 20 km distances.  Furthermore, our algorithm can also perform in the long-distance regime contrary to ECC schemes, since it has very low FER rates even in the high QBER regime.

\begin{figure}[tb]
	\centering
	\includegraphics[width=\linewidth]{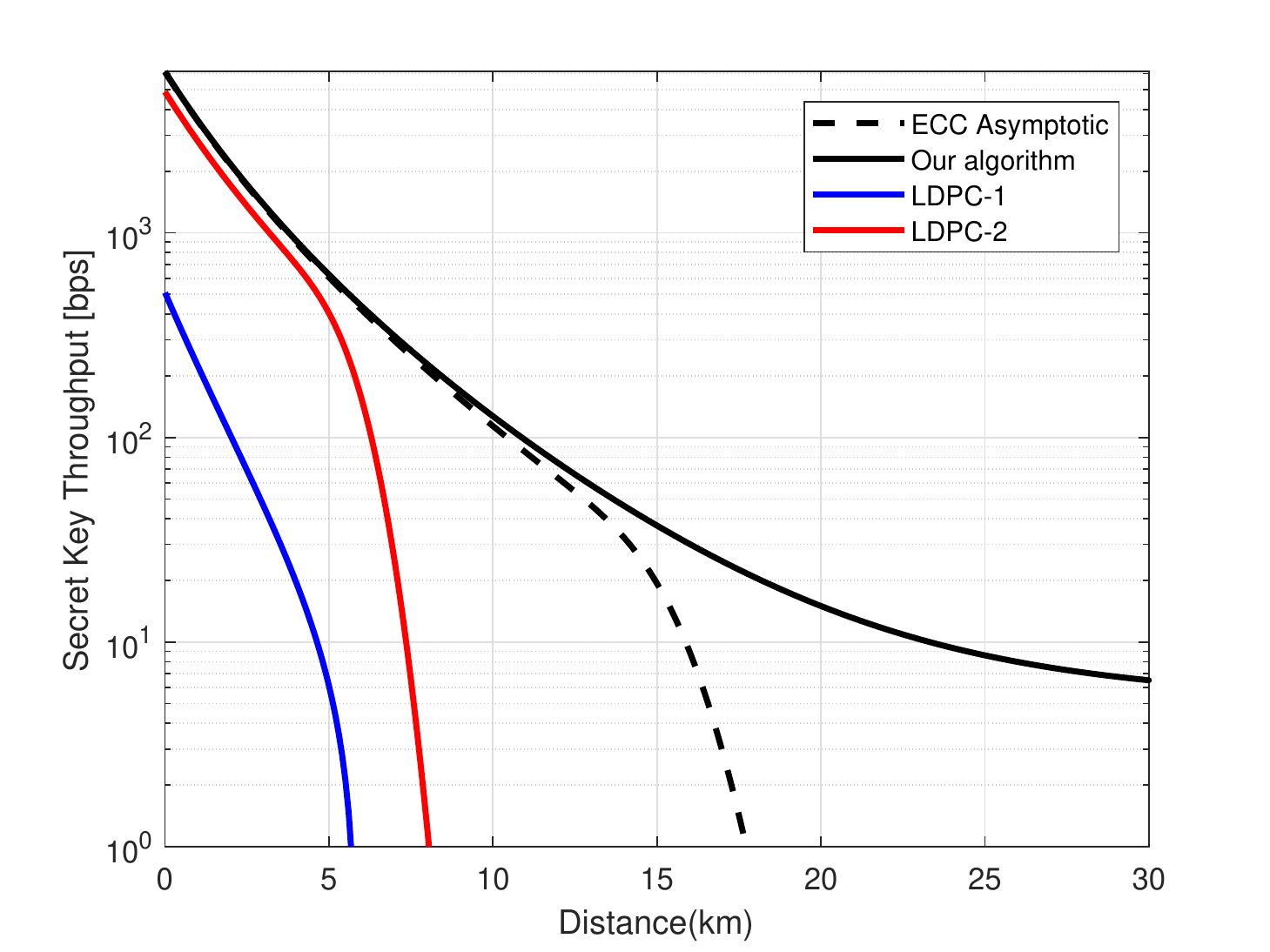}
	\caption{Secret key throughput comparison of different key reconciliation algorithms considering FSO links.
	}
	\label{fig:FSO}
\end{figure}
Figure \ref{fig:FSO} illustrates the secret key throughputs considering the FSO quantum link. Since QBER increases faster in the FSO link, the FER increases more rapidly as the distance increases. Therefore, the performance of ECC schemes drops faster than the fiber link. Our algorithm can also perform at longer distances in FSO link than ECC schemes. Since QBER becomes 0.5 after 20 km in FSO, the detection in the binary symmetric channel almost becomes arbitrary. Naturally, the error floor occurs after 20 km distance in our algorithm.

\textcolor{black}{
	Considering both of the figures, the proposed algorithm provides a higher secret key throughput at all link distances. Especially for the short distance interval, the improved performance results from the zero information leakage property of the proposed algorithm. As the distance between Alice and Bob increases, FER of our algorithm is not affected as much as error-correcting codes. Therefore, our algorithm can pave the way to the implementation of longer QKD links.}

\section{Conclusion}
In this paper, we proposed a polynomial interpolation based key reconciliation protocol for quantum key distribution (QKD). By solely sharing unrelated information with the secret key, the revealed information to the eavesdropper is proven to be zero during the key reconciliation process. We provided the exact frame error rate (FER) expression for the proposed protocol. As illustrated in the numerical studies, the proposed protocol outperforms error-correcting codes by the means of secret key throughput considering fiber and FSO links. We consider reducing the computational complexity in the future work by optimizing the block length and the number of blocks.

\bibliographystyle{IEEEtran}
\bibliography{IEEEexample}

\end{document}